\def\cpp{C\nobreak\raise.15ex\hbox{\kern.1ex+\kern.1ex+}}
\newcommand{\std}[1]{\cite[{#1}]{cpp11}}
\newcommand{\eg}{e.g.,\ }
\newcommand{\ie}{i.e.,\ }
\newcommand{\titletext}{On the Use of Underspecified Data-Type
  Semantics \\
  for Type Safety in Low-Level Code}
\newtheorem{lemma}{Lemma}
{\theoremstyle{definition}
 \newtheorem{definition}[lemma]{Definition}
}
\crefname{section}{Sect.}{Sections}
\crefname{figure}{Fig.}{Figures}
\crefname{table}{Tab.}{Tables}
\crefname{definition}{Def.}{Defs.}
\crefname{lemma}{Lemma}{Lemmas}
\crefname{theorem}{Theorem}{Theorems}
\crefname{equation}{eq.}{equations}
\crefname{lstlisting}{Line}{Lines}
\crefname{error}{Class}{Classes}
\crefname{appendix}{App.}{Appendix}
\Crefname{section}{Sect.}{Sections}
\Crefname{figure}{Figure}{Figures}
\Crefname{table}{Table}{Tables}
\Crefname{definition}{Definition}{Definitions}
\Crefname{lemma}{Lemma}{Lemmas}
\Crefname{theorem}{Theorem}{Theorems}
\Crefname{equation}{Equation}{Equations}
\Crefname{error}{Class}{Classes}
\Crefname{appendix}{App.}{Appendix}
\lstdefinelanguage{PVS}{
  keywords={all,and,array,assuming,assumption,axiom,begin,but,cases,
      challenge,cond,nonempty_type,
      claim,conjecture,containing,corollary,datatype,else,elsif,end,
      endassuming,endcases,endif,epsilon,exists,esporting,fact,false,forall,
      formula,from,function,if,iff,implies,importing,in,lam,lambda,law,
      lemma,let,measure,not,o,obligation,of,or,plet,postulate,proposition,
      recursive,some,sublemma,then,theorem,theory,true,type,var,when,where,
      with},
  sensitive=false,
  comment=[l]{\%}
}
\lstdefinestyle{C++}{
  language=C++,
  basicstyle=\setlength\leftskip{2mm}\fontsize{8}{9}\selectfont\ttfamily,
  numberstyle=\tiny,numbersep=2pt,
  columns=fullflexible
}
\lstdefinestyle{C++inline}{
  language=C++,
  basicstyle=\ttfamily,
  columns=fullflexible
}
\lstdefinestyle{PVS}{
  language=PVS,
  basicstyle=\setlength\leftskip{2mm}\fontsize{8}{9}\selectfont\sffamily,
  numberstyle=\tiny,numbersep=2pt,
%  basicstyle=\small\sffamily,
%  basicstyle=\sffamily,
  columns=fullflexible,
  literate={AND}{{$\wedge\,$}}1 {NOT}{{$\neg\!$}}1 {OR}{{$\vee\,$}}1
  {IMPLIES}{{$\Longrightarrow\,$}}4 {/=}{{$\neq\,$}}2 {->}{{$\to\,$}}3
  {LAMBDA}{{$\lambda$}}1 {>=}{{$\geq\,$}}2 }
\lstdefinestyle{PVSinline}{language=PVS,
  basicstyle=\fontsize{8}{9}\selectfont\sffamily,
%  basicstyle=\small\sffamily,
%  basicstyle=\sffamily,
  columns=fullflexible,
  literate={AND}{{$\wedge\,$}}1 {NOT}{{$\neg\!$}}1 {OR}{{$\vee\,$}}1
  {IMPLIES}{{$\Longrightarrow\,$}}4 {/=}{{$\neq\,$}}2 {->}{{$\to\,$}}3
  {LAMBDA}{{$\lambda$}}1 {>=}{{$\geq\,$}}2
}
\def\CVSversion$#1: 1.#2 ${#2}
\def\CVSdate$#1: #2-#3-#4 #5 ${#4.#3.#2 #5}
\def\CVSauthor$#1: #2 ${#2}
\begin{document}

\title{\titletext\thanks{
This work was in part funded by the European Commission through PASR
grant 104600, by the Deutsche Forschungsgemeinschaft through the QuaOS
project, and by the Swedish Research Council.
}}
\author{Hendrik Tews \qquad\qquad Marcus V\"olp
  \institute{Technische Universit\"at Dresden, Germany}
  \email{\{tews,voelp\}@os.inf.tu-dresden.de}
  \and
  Tjark Weber
  \institute{Uppsala University, Department of IT, Sweden}
  \email{tjark.weber@it.uu.se}}

\def\titlerunning{On the Use of Underspecified Data-Type Semantics}
\def\authorrunning{H. Tews, M. V\"olp, T. Weber}

\maketitle

\begin{abstract}
% 4-sentence approach:
% \begin{enumerate}
% \item State the problem.
% \item Say why it’s an interesting problem.
% \item Say what your solution achieves.
% \item Say what follows from your solution.
% \end{enumerate}
In recent projects on operating-system verification, C and {\cpp} data
types are often formalized using a semantics that does not fully
specify the precise byte encoding of objects.  It is well-known that
such an underspecified data-type semantics can be used to detect
certain kinds of type errors.  In general, however, underspecified
data-type semantics are unsound: they assign well-defined meaning to
programs that have undefined behavior according to the C and {\cpp}
language standards.
%
%$\quad$ 
A precise characterization of the type-correctness properties
that can be enforced with underspecified data-type semantics is still
missing.  In this paper, we identify strengths and weaknesses of
underspecified data-type semantics for ensuring type safety of
low-level systems code. We prove sufficient conditions to detect
certain classes of type errors and, finally,
identify a trade-off between the complexity of underspecified
data-type semantics and their type-checking capabilities.

% \keywords{data-type semantics, type safety, formal verification,
%   operating systems}
\end{abstract}

%%%%%%%%%%%%%%%%%%%%%%%%%%%%%%%%%%%%%%%%%%%%%%%%%%%%%%%%%%%%%%%%%%%%%%%%%%%%%%%%

\section{Introduction}
\label{sec:intro}
%======================

The formalization of C with abstract-state machines by Gurevich and
Huggins~\cite{DBLP:conf/csl/GurevichH92}, Norrish's \cpp{} semantics
in HOL4~\cite{Norrish:CPP2008} and the operating-system verification
projects VFiasco~\cite{vfiasco}, l4.verified~\cite{l4_verified} and
Robin~\cite{tews08verification} all use a semantics of C or \cpp{}
data types that employs (untyped) byte sequences to encode typed
values for storing them in memory.  An underspecified, partial
function converts byte sequences back into
typed values. 

We use the term \emph{underspecified data-type semantics} to refer to
such a semantics of data types that converts between typed values and
untyped byte sequences while leaving the precise conversion functions
underspecified.  With an underspecified data-type semantics, it is
unknown during program verification which specific bytes are written
to memory.

The main ingredients of underspecified data-type semantics are two
functions --- $\mathit{to\_byte}$ and $\mathit{from\_byte}$ ---
that convert between typed values and byte sequences. The
function \textit{from\_byte} is
in general partial,
because not every byte sequence encodes a typed value.  For instance,
consider a representation of integers that uses a parity bit:
$\mathit{from\_byte}^\text{\lstinline[style=C++inline]{int}}$ would be
undefined for byte sequences with invalid parity.
% Invalid byte
% sequences are called \emph{trap representations} in the
% C~standard~\cite[\S6.2.6.1]{c11}. 

% \textbf{XXX REV 3 OLD:}
% Underspecified data-type semantics are relevant for the verification
% of low-level systems code.  This includes code that needs to maintain
% hardware-controlled data structures, \eg page directories, or that
% contains its own memory allocator.  Type and memory safety of such
% low-level code depend on its functional correctness.  They are
% undecidable in general, and cannot be enforced with a static type
% system or other fully automatic techniques at compile time.
% %
% By using underspecified data-type semantics, one can regain some
% kind of automatic type checking for well-behaved code fragments
% in the verification environment~\cite{tuch07types}. Thereby, the
% type-correctness property can be precisely tailored to the needs
% of the specific verification goals, for instance, by taking 
% assumptions about hardware-specific data types into account.

%\textbf{XXX NEW:}
Underspecified data-type semantics are relevant for the verification
of low-level systems code.  This includes code that needs to maintain
hardware-controlled data structures, \eg page directories, or that
contains its own memory allocator.  Type and memory safety of such
low-level code depend on its functional correctness and are
undecidable in general. For this reason, type safety for such
code can only be established by logical reasoning and not by a
conventional type system. As a consequence, this paper focuses on
data-type \emph{semantics} instead of improving the type system
for, \eg \cpp.

Having to establish type correctness by verification is not as
bad as it first sounds. With suitable lemmas, type correctness
can be proved automatically for those parts of the code that are
statically type correct~\cite{tuch07types}. Thereby, the
type-correctness property can be precisely tailored to the needs
of the specific verification goals, for instance, by taking 
assumptions about hardware-specific data types into account.

It has long been known that underspecified data-type semantics can
detect certain type errors during verification, and thus imply certain
type-correctness properties~\cite{vfiasco-types}.  Because the encoding
functions $\mathit{to\_byte}$ for different types~$T$ and~$U$ are a~priori unrelated,
programs are prevented from reading a $T$-encoded byte sequence with
type~$U$.  Any attempt to do so will cause the semantics to become
stuck, and program verification will fail.

However, additional assumptions, which are often necessary to verify
machine-dependent code, easily void this property.  For instance, if
one assumes that the type
\lstinline[style=C++inline]{unsigned}~\lstinline[style=C++inline]{int}
can represent all integer values from~$0$ to~$2^n-1$ on $n$-bit
architectures,
$\mathit{from\_byte}^\text{\lstinline[style=C++inline]{unsigned}
  \lstinline[style=C++inline]{int}}$ becomes total for cardinality
reasons.  Consequently, \emph{any} sequence of~$n$ bits
becomes a valid encoding of a value of this type.
% may then be read with this type.

%% Moreover, \cpp{} has types that are not \emph{trivially
%%   copyable}~\std{\S3.9}: for instance, objects that register
%% themselves in some global data structure upon construction. A bitwise
%% copy of such an object does not preserve the associated semantics (\eg
%% it would not be automatically registered in the same global data
%% structure), and must therefore not be accessed with the object's type.
%% One insight of this paper is that simple underspecified data-type
%% semantics do not enforce this restriction: thus, they are unsound for
%% non-trivially copyable data.\footnote{There are at least two other
%%   sources of unsoundness, which we will not discuss further in this
%%   paper.  First, values may have more than one representation: \eg the
%%   value~$0$ on a ones' complement machine.  This is not adequately
%%   modeled by a functional encoding.  Second, creating trap
%%   representations in memory has undefined behavior, but underspecified
%%   data-type semantics merely prevent programs from reading them.}

Despite the widespread use of underspecified data-type semantics
%(\cref{sec:semantics}) 
for the verification of systems code, a precise characterization of
the type-correctness properties that these semantics can enforce is
still missing.

In this paper, we investigate different kinds of type errors and
different variants of underspecified data-type semantics. We
provide sufficient conditions for the fact that a certain
semantics can prove the absence of certain type errors and
describe the trade-off between the complexity of the semantics
and the type errors it can detect. One key insight is that the
simple underspecified data-type semantics that we advocated
before~\cite{vfiasco-types,tews:memory_peculiarities} is only
sound for trivially copyable data~\std{\S3.9} under
strong preconditions, which are typically violated in low-level
systems code. Type correctness for non-trivially copyable data in
the sense of \cpp{} requires a rather complicated semantics that
exploits protected bits, see \cref{sec:external_state}.

The remainder of this paper is structured as follows: in the next
section, we recollect the formalization of underspecified data-type
semantics.  \Cref{sec:type_errors} describes our classes of type
errors.  In \cref{sec:type_safety}, we formally
define \emph{type sensitivity} as a new type-correctness property that
rules out these errors, and discuss the type sensitivity of three
different variants of underspecified data-type semantics.
\Cref{sec:interplay} formally proves sufficient conditions for
type sensitivity
and
\cref{sec:related_work} discusses related work.
For space reasons, 
a small case study that exemplifies our approach 
has been moved to
\cref{sec:case_study}.
Part of our results have
been formalized in the theorem prover PVS~\cite{pvs2008}.  The
corresponding sources are publicly available.\footnote{At
  \label{lab:pvs-source}%
  \url{http://os.inf.tu-dresden.de/\~voelp/sources/type_sensitive.tar.gz}}

% In this paper, we classify different kinds of type
% errors (\cref{sec:type_errors}), and prove sufficient conditions on
% underspecified data-type semantics to enforce the absence of these
% errors (\cref{sec:interplay}).  We identify a trade-off between the
% complexity of the semantics and the different kinds of type errors
% that it can detect: subtle errors require more sophisticated semantics
% (\cref{sec:type_safety}).  A small verification case study exemplifies
% the use of underspecified data-type semantics (\cref{sec:case_study}).
% We discuss related work in \cref{sec:related_work}, before offering
% our conclusions in \cref{sec:conclusions}.  Part of our results have
% been formalized in the theorem prover PVS~\cite{pvs2008}.  Our PVS
% files are publicly available.\footnote{At
%   \url{http://os.inf.tu-dresden.de/\~voelp/sources/type_sensitive.tgz}}

\section{The Power of Underspecified Data Types}
\label{sec:semantics}
%================================================

Type checking with underspecified data-type semantics is rooted in the
observation that many programming languages do not fully specify the
encoding of typed values in memory.  For instance, the programming
languages Java~\cite{java} and Go~\cite{go} leave language
implementations (compilers and interpreters) complete freedom in how
much memory they allocate, and how values are encoded in the bytes
that comprise an object in memory.  The standards of the programming
languages C~\cite{c11} and \cpp{}~\cite{cpp11} also leave encoding and
object layout (including endianness and padding) mostly unspecified,
with few restrictions: \eg object representations must have a fixed
(positive) size.

\begin{figure}[t]
\begin{center}
\subfigure[Semantics stack]{%
  \includegraphics[width=.48\textwidth]{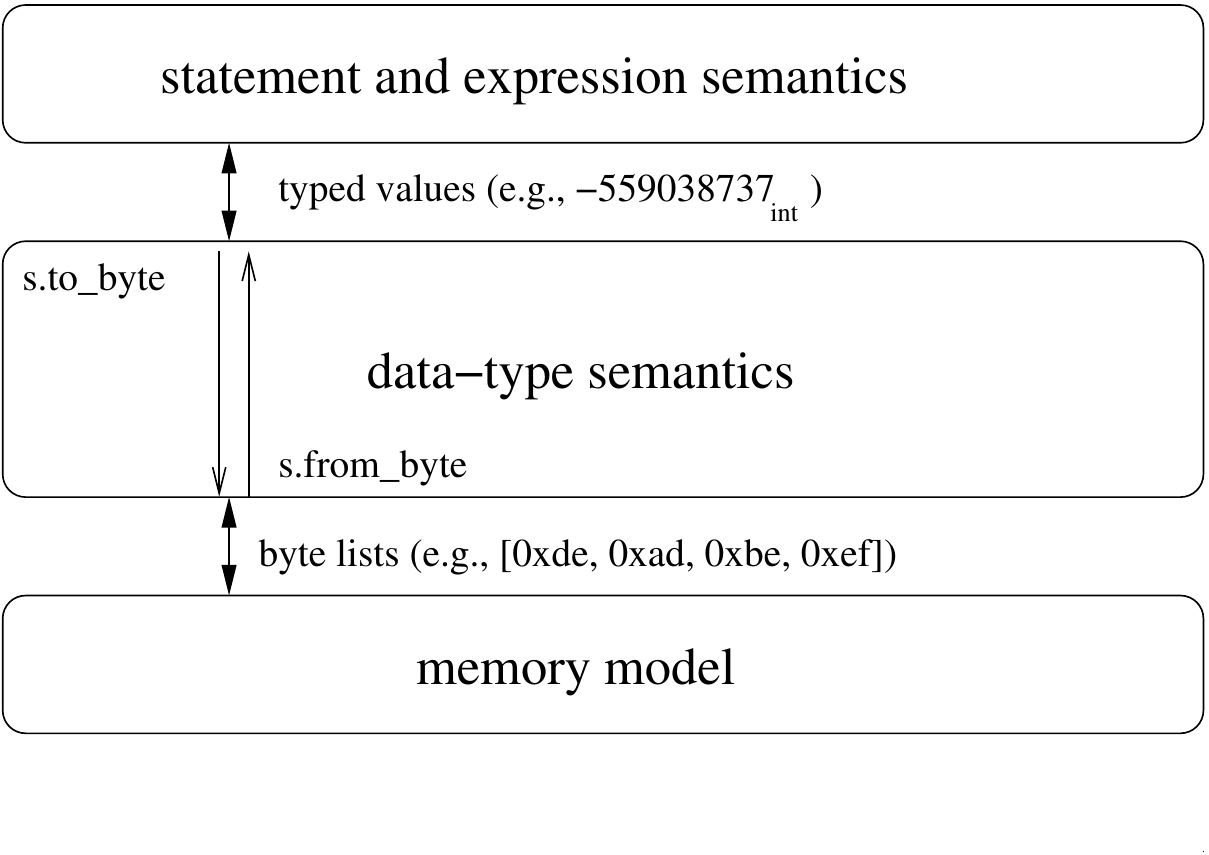}%
  \label{fig:semantics-stack}%
}\quad%
\subfigure[Approach]{%
  \includegraphics[width=.46\textwidth]{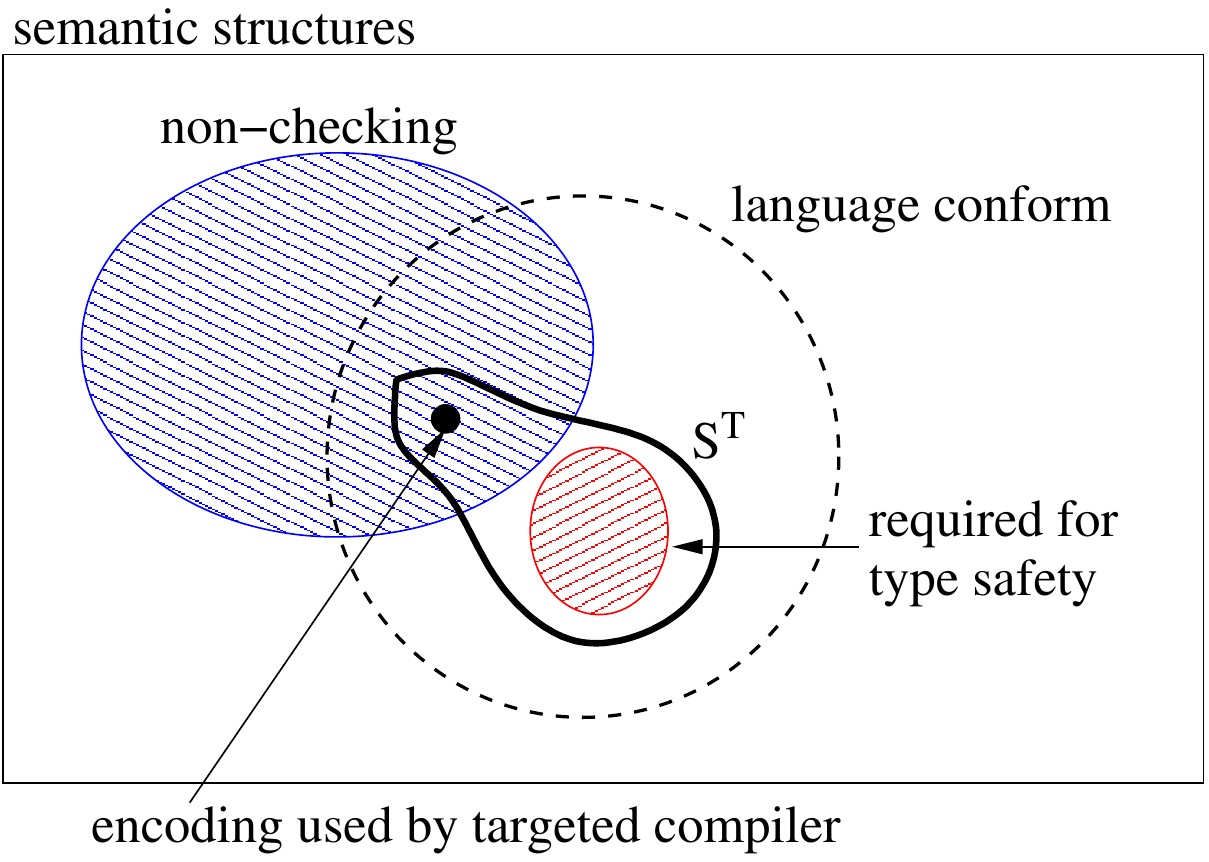}%
  \label{fig:approach}%
}%
\caption{Type checking with underspecified data-type semantics.}
\vspace{-3mm}
\label{fig:data_type_semantics}
\end{center}
\end{figure}

The data-type semantics associates a semantic structure~$s^T$ (defined
below) with each primitive language type~$T$.  Semantic structures
provide conversion functions between typed values and their memory
representation: $s^T.\mathit{to\_byte}$ translates values of type~$T$
into untyped byte lists, and $s^T.\mathit{from\_byte}$ translates byte
lists back into typed values.  The data-type semantics thereby
provides an abstract interface that connects the high-level semantics
of the language's statements and expressions to a byte-wise organized
memory model (\cref{fig:semantics-stack}).  As a beneficial side
effect, this abstraction allows us to omit from this paper both the
details of the statement/expression semantics and the details of the
memory model (which includes virtual-to-physical memory mappings and
memory-mapped devices~\cite{tews:memory_peculiarities}).

The key idea behind ensuring type safety with underspecified data-type
semantics is taken from~\cite{vfiasco-types}, and illustrated in
\cref{fig:approach}.  The conversion functions $s^T.\mathit{to\_byte}$
and $s^T.\mathit{from\_byte}$ are typically 
underspecified.\footnote{% 
  A function is underspecified if its precise mapping on values
  is not known. For an underspecified partial function the
  precise domain may also be unknown. Formally, one achieves this
  effect by using an arbitrarily chosen but fixed element of a
  suitable set of functions.
} %
The
precise requirements on these functions depend on the type~$T$ (and
possibly on additional factors such as the targeted compiler and
hardware architecture).  For instance, C requires that the type
\lstinline[style=C++inline]{unsigned}~\lstinline[style=C++inline]{char}
uses a ``pure binary'' encoding~\cite[\S6.2.6.1]{c11}.  Let
$\mathbb{S}^T$ denote the set of all semantic structures that meet
these requirements.

Program verification is carried out against an arbitrary (but fixed)
structure $s^T \in \mathbb{S}^T$, for each primitive type~$T$.
Therefore, verification succeeds only if it would succeed for
every possible choice of structures $s^T \in \mathbb{S}^T$ for
every primitive type $T$.

One can think of each structure~$s^T$ as a specific way a compiler
implements objects of type~$T$.  The set~$\mathbb{S}^T$ should contain
semantic structures that correspond to realistic compilers.  These
structures typically perform little or no runtime type checking, but
their inclusion guarantees that verification results apply to code
that is compiled and run on existing hardware.

Provided that the encoding of~$T$ is sufficiently underspecified,
$\mathbb{S}^T$ also contains more obscure semantic structures that may
not correspond to realistic compilers, but that can detect certain
type errors.  Since $s^T.\mathit{from\_byte}$ is partial, it may be
undefined for byte lists that are not of the form
$s^T.\mathit{to\_byte}(v)$ for some value~$v$.  In this case, the
semantics would get stuck when a program attempts to read an invalid
representation with type~$T$, and program verification would fail.  A
single $s^T \in \mathbb{S}^T$ whose $\mathit{from\_byte}$ function is
undefined for the invalid representation suffices to render normal
program termination unprovable.

\label{lab:s_bool}
As an example, consider the type \lstinline[style=C++inline]{bool} of
Booleans.  A semantic structure for this type is easily obtained by
imitating the encoding of values of a particular language
implementation.  For instance,
$s^\text{\lstinline[style=C++inline]{bool}}_\mathrm{gcc}$ is a
semantic structure for \lstinline[style=C++inline]{bool} if we define
$s^\text{\lstinline[style=C++inline]{bool}}_\mathrm{gcc}.\mathit{to\_byte}$
to map $\mathit{false}$ to the byte value~$[\mathtt{0x00}]$, and
$\mathit{true}$ to~$[\mathtt{0x01}]$.  The GCC C~compiler decodes
Booleans by mapping $[\mathtt{0x00}]$ to $\mathit{false}$, and all
other values to~$\mathit{true}$.  The corresponding
$s^\text{\lstinline[style=C++inline]{bool}}_\mathrm{gcc}.\mathit{from\_byte}$
function is total, and as such incapable of detecting type errors: all
byte values are valid representations
for~\lstinline[style=C++inline]{bool}.

As a second example, consider the semantic
structure~$s^\text{\lstinline[style=C++inline]{bool}}_{\mathtt{0},\mathtt{1}}$
that agrees with
$s^\text{\lstinline[style=C++inline]{bool}}_\mathrm{gcc}$, except
that~$s^\text{\lstinline[style=C++inline]{bool}}_{\mathtt{0},\mathtt{1}}.\mathit{from\_byte}$
is undefined for byte lists other than~$[\mathtt{0x00}]$
and~$[\mathtt{0x01}]$.  This structure is able to detect 
as type error
all
modifications of Boolean variables~$b$ that store a
value other than~$\mathtt{0x00}$ or~$\mathtt{0x01}$ at the address
of~$b$. When a program attempts to read~$b$ with
type~\lstinline[style=C++inline]{bool},
$s^\text{\lstinline[style=C++inline]{bool}}_{\mathtt{0},\mathtt{1}}.\mathit{from\_byte}$
is undefined for the modified value, and the semantics will get stuck.

Now suppose that $\mathbb{S}^\text{\lstinline[style=C++inline]{bool}}
= \{ s^\text{\lstinline[style=C++inline]{bool}}_\mathrm{gcc},
s^\text{\lstinline[style=C++inline]{bool}}_{\mathtt{0},\mathtt{1}},
s^\text{\lstinline[style=C++inline]{bool}}_{\mathtt{2},\mathtt{3}},
\dots \}$ additionally contains a structure
$s^\text{\lstinline[style=C++inline]{bool}}_{\mathtt{2},\mathtt{3}}$
that performs the analogous mapping for the object representations
$[\mathtt{0x02}]$~($\mathit{true}$) and
$[\mathtt{0x03}]$~($\mathit{false}$).  Because program verification is
carried out against an arbitrary (but fixed) structure
$s^\text{\lstinline[style=C++inline]{bool}} \in
\mathbb{S}^\text{\lstinline[style=C++inline]{bool}}$, programs can be
verified only if they are correct wrt.\ every structure in
$\mathbb{S}^\text{\lstinline[style=C++inline]{bool}}$.  No fixed
(constant) byte value is in the domain of all
$s^\text{\lstinline[style=C++inline]{bool}}.\mathit{from\_byte}$
functions, hence any modification that stores such a fixed value at
the address of~$b$ will be detected as a type error.  In contrast,
reading a byte list of the form
$s^\text{\lstinline[style=C++inline]{bool}}.\mathit{to\_byte}(v)$ (for
$v \in \{\mathit{true}, \mathit{false}\}$) at the address of~$b$ will
never cause an error.

For practical purposes one chooses~$\mathbb{S}^T$ as indicated by
the thick black line in \cref{fig:approach}: It should contain the set
of those semantic structures that are needed to ensure type
safety as well as those structures that represent the encoding of
the used compiler. The latter requirement ensures that the
verification results apply directly to the generated code
(assuming that the compiler is correct).

Additional assumptions typically constrain the set~$\mathbb{S}^T$ of
admissible semantic structures.  They are often necessary to prove
correctness of machine-dependent code.  For instance, assuming that
some pointer type~$T$ has the same size as
type~\lstinline[style=C++inline]{int} makes the set~$\mathbb{S}^T$
smaller, but may be required for the verification of a custom memory
allocator that casts integers into type $T$ internally.  The questions
of interest are therefore:
\begin{enumerate}
\item Which kinds of type errors can be detected with an
  underspecified data-type semantics, and
\item when do additional assumptions constrain $\mathbb{S}^T$ to a
  point where type safety is no longer guaranteed?
\end{enumerate}
Giving partial answers to these questions is the central contribution
of this paper.  We now define semantic structures more formally.

\subsection{Semantic Structures}
\label{sec:semantic_structures}
%--------------------------------

In~\cref{sec:type_safety}, we present three variants of semantic
structures with increasing type-checking capabilities.  To prepare for
the two more advanced variants, the definition below contains
``$\cdots$'' as a placeholder for further parameters.  For now, we
assume no further parameters.

\begin{definition}[Semantic structure]
  \label{def:sem-struct}
  Let $T$ be a type. A semantic structure $s = (V, A, \mathit{size},
  \mathit{to\_byte}, \mathit{from\_byte})$ for $T$ consists of a
  non-empty set of values~$V$, a set of addresses~$A \subseteq
  \mathbb{N}$ (specifying alignment requirements for objects of
  type~$T$), a positive integer $\mathit{size}$ (specifying the size
  of object encodings), and two conversion functions:
  \begin{align*}
    \mathit{to\_byte}\,\colon&\quad V\times\cdots\ \to\ list[byte]\times\cdots \\
    \mathit{from\_byte}\,\colon&\quad list[byte]\times\cdots\ \rightharpoonup\ V
  \end{align*}
  where $\mathit{list}[\mathit{byte}]$ denotes the type of byte lists,
  and $\mathit{from\_byte}$ is a partial function. Every semantic
  structure must satisfy the following properties for all $v \in V$:
  \begin{align}
    \label{eq:semantic:size}
    \mathit{length}(\mathit{to\_byte}(v, \ldots)) \quad=&\quad
    \mathit{size} \\
    \label{eq:semantic:inverse}
    \mathit{from\_byte}(\mathit{to\_byte}(v, \ldots), \ldots)
    \quad=&\quad v
  \end{align}
  where $\mathit{length}\colon \mathit{list}[\mathit{byte}] \to
  \mathbbm{N}$ denotes the length of a byte list.
\end{definition}

\Cref{eq:semantic:inverse} requires $s.\mathit{from\_byte}$ to be
defined on byte lists that form a valid object representation for~$T$.
Otherwise, $s.\mathit{from\_byte}$ may be undefined.  To ensure type
safety, we will exploit this fact by constructing sufficiently many
semantic structures whose $\mathit{from\_byte}$ function is partial:
at least one for every byte list that may have been modified by a type
error. Note that \cref{eq:semantic:inverse} ensures that one can
always read a value that has been written with the same semantic
structure. Therefore, the data-type semantics allows to verify
well-typed code as expected.

The set of values that a type can hold may depend on the hardware
architecture and compiler.  For instance, the \cpp{} type
\lstinline[style=C++inline]{unsigned int} can typically represent
values from~$0$ to~$2^n-1$ on $n$-bit architectures.  This set is
therefore specified by each semantic structure, just like size,
alignment, and encoding.  For the verification of concrete programs,
we generally assume a minimal set of values that can be represented by
all semantic structures in $\mathbb{S}^T$.

We use bytes in the definition of semantics structures, because
we assume a byte-wise organized memory model that resembles real
hardware. This is sufficient to support the bit fields of \cpp{},
because the \cpp{} standard specifies that the byte is the smallest
unit of memory modifications~\std{\S 1.7(3)--(5)}. By using lists of bits
and bit-granular addresses instead, one could support more
general architectures with more general bit fields.

\section{Type Errors}
\label{sec:type_errors}
%=======================

Type errors are undesirable behaviors of a program that result from
attempts to perform operations on values that are not of the
appropriate data type.  The causes for these errors are diverse.
Buffer overflows, dangling or wild pointers, (de-)allocation failures,
and errors in the virtual-to-physical address translation can all lead
to type errors in low-level code.

Formally, we say that memory~$m$ is \emph{modified} relative to
memory $m'$ 
at
address~$a$ if it cannot be
proven that $m$ and $m'$ are identical at $a$.
A memory modification (at address $a$) is a state transition
starting with memory $m$ and yielding memory $m'$ such that $m'$
is modified relative to $m$ (at address $a$).
A read-access to an object in memory at address $a$
with type $T$ is \emph{type correct}, if the content at~$a$ is
provably the result of a write-access to $a$ with type $T$. A
program is \emph{type correct} if all its memory-read accesses
are type correct. It turns out that the strength of
underspecified data-type semantics to detect an incorrectly typed
read access depends on the kind of memory modification that
happened before the read access in the relevant memory region. 
Therefore, we
 define \emph{type error} as a memory modification that
causes an incorrectly typed read access. To be able to view all
missing variable initialization as type error, we assume a memory
initialization that overwrites the complete memory with arbitrary
values. 

Our notion of type correctness has specific properties that are
needed for the verification of systems code. Firstly, it
permits to
arbitrarily overwrite memory whose original contents will not be
accessed any more.
Secondly, it depends on the presence and strength of additional
assumptions. If one assumes, for instance, that the range of
$to\_byte^{\text{\lstinline[style=C++inline]|void*|}}$ is contained in
the domain of
$from\_byte^{\text{\lstinline[style=C++inline]|unsigned|}}$, then
reading the value of a \lstinline[style=C++inline]|void| pointer into
an \lstinline[style=C++inline]|unsigned| variable \emph{is} type
correct.

For the analysis in the following sections, we define the following
classes of type errors (which are formally sets of memory
modifications). These classes are only used with respect to a specific
read access at some address $a$ with some semantic structure $s^T$
(for some type $T$). The memory modifications contained in some class
may therefore depend on $s^T$ and $a$.  Note that the memory
modifications in these classes may be caused not just by writing
variables in memory, but also by hardware effects, such as changes of
memory mapped registers or DMA access by external devices.

\begin{enumerate}
\item\label{error:uninitialized}{\bf Unspecified memory contents:}
  memory may contain arbitrary values; for
  instance, when the program reads a location that has not been
  initialized before. Formally, this class contains all memory
  modifications such that the modified value is indeterminate.
\item\label{error:constant}{\bf Constant byte values:} memory
  locations may contain specific (constant) byte values; for instance,
  newly allocated memory may, in some cases, be initialized to {\tt
    0x00} by the operating system or runtime environment.  
  % These byte
  % values may not constitute a valid object representation for all
  % types.
  This class contains all memory modifications where the modified
  value is a constant.
\item\label{error:implicit-casts}{\bf Object representation of a
  different type:} a $T$-typed read operation may find an object
  representation for a value of a different type~$U$ in memory. 
  Typed reads of differently typed values result in implicit casts. 
  Such an implicit cast happens, for instance, when the program attempts to
  read an inactive member of a \lstinline[style=C++inline]{union}
  type, or when a pointer of type \lstinline[style=C++inline]{T*} is
  dereferenced that actually points to an object of
  type~\lstinline[style=C++inline]{U}.

  For a given structure~$s^T$ (for some type~$T$) and an address~$a$,
  this class contains all memory modifications that write a complete
  object representation of some structure~$s^{U}$ (for a type $U \neq
  T$ with $s^U.\mathit{size} = s^T.\mathit{size}$) at~$a$.
\item\label{error:partial-read}{\bf Parts of valid object
  representation(s):} a special form of implicit cast occurs when the
  read operation accesses part of a valid object representation.  For
  instance, \lstinline[style=C++inline]{*(char*)p} reads the first
  character of the object pointed to
  by~\lstinline[style=C++inline]{p}.  A single read of a larger object
  may also span several valid object representations (or parts
  thereof) simultaneously.

  Although this error class shows many similarities to
  Class~\ref{error:implicit-casts}, it illustrates the need for
  type-safety theorems capable of ruling out undesired modifications
  at the minimal access granularity of memory (typically one byte on
  modern hardware architectures).

  For a given structure $s^T$ (for some type $T$) and an address~$a$,
  this class contains all memory modifications that overwrite the
  memory range $[a, a+s^T.\mathit{size})$ with (some slice of)
    consecutive object representations of structures~$s^{U_1}$,
    $s^{U_2}$, {\dots} (for arbitrary types~$U_i$). For $i=1$, this
    reduces to Class~\ref{error:implicit-casts}, and we require $U_1
    \neq T$---otherwise there is no type error.
\item\label{error:bitwise-copy}{\bf Bitwise copy of valid object
  representations:} objects may perform operations on construction and
  destruction; for instance, they might register themselves in some
  global data structure. A bitwise copy of such an object does not
  preserve the semantics associated with the object. Consequently, any
  attempt to access the bitwise copy with the object's type may lead
  to functional incorrectness. We consider this a type error.

  For a given structure~$s^T$ (for some type~$T$), this class contains
  all memory modifications that write at least one bit of an object
  representation of~$s^{T'}$ (for an arbitrary type~$T'$).
\end{enumerate}
We have presented these error classes in order of increasing detection
difficulty.  Class~\ref{error:bitwise-copy} is particularly
challenging, because the invalid copy is bitwise indistinguishable
from a valid object representation.  
The classes were developed during our investigation for a sound
data-type semantics for non-trivially copyable types. We make no
claim about their completeness.
In the next section, we discuss
how the different error classes may be detected with suitable variants
of underspecified data-type semantics.

%%%%%%%%%%%%%%%%%%%%%%%%%%%%%%%%%%%%%%%%%%%%%%%%%%%%%%%%%%%%%%%%%%%%%%%%%%%%%%%%

\section{Type Sensitivity with Semantic Structures}
\label{sec:type_safety}
%===================================================

In this section, we introduce the notion of \textit{type sensitivity} to
capture the requirement that no type errors occur as a result of
memory modifications. 

\begin{definition}[Type Sensitivity]
A data-type semantics for a type~$T$ is \emph{type sensitive with
  respect to a class~$\mathcal{C}$ of memory modifications} if normal
program termination implies that 
memory read with type $T$ was not changed by modifications in~$\mathcal{C}$.
%
%% no modification from~$\mathcal{C}$
%% was read with type~$T$.
\end{definition}
Applied to our approach, type sensitivity means that for every memory
modification in some class~$\mathcal{C}$, and for every subsequent read of a
modified object with type~$T$, there must be a suitable semantic
structure $s \in \mathbb{S}^T$ that can detect the modification as an
error. More precisely, $s$ is suitable if $s.\mathit{from\_byte}$ is
undefined for the modified object representation.

In the remainder of this section, we introduce three different
variants of underspecified data-type semantics:
\emph{plain object encodings}, \emph{address-dependent object encodings}, and
\emph{external-state dependent object encodings}.
%% \begin{enumerate}
%% \item \emph{plain object encodings} (i.e., no additional parameters),
%% \item \emph{address-dependent object encodings}, and
%% \item \emph{external-state dependent object encodings}.
%% \end{enumerate}
These variants are type sensitive with respect to increasingly large
classes of modifications.

\subsection{Plain Object Encodings}
\label{sec:plain_encodings}
%----------------------------

Plain object encodings for semantic structures are inspired by
trivially copyable \cpp{} data types. An object of trivially copyable
type~$T$ can be bitwise copied into a sufficiently large ($\ge
s^T.\mathit{size}$) \lstinline[style=c++inline]{char} array and back,
and to any other address holding a $T$-typed object, without affecting
its value~\std{\S3.9(2)}.  Examples of plain encodings for integers
include two's complement and sign magnitude, but also numeration
systems augmented with, \eg cyclic redundancy codes.  The semantic
structures for plain object encodings are as described in
\cref{def:sem-struct}, \ie with no additional parameters.

Plain object encodings can detect reads from uninitialized memory
(Class~\ref{error:uninitialized}) and reads of constant data
(Class~\ref{error:constant}), as exemplified in \cref{sec:semantics}.
Plain object encodings can also detect implicit casts of a differently
typed object (Class~\ref{error:implicit-casts}) or parts of it
(Class~\ref{error:partial-read}), provided $\mathbb{S}^T$ is
sufficiently rich, \ie type-sensitive with respect to the relevant
class.  We shall return to this condition in \cref{sec:interplay}.

Plain object encodings cannot detect errors from
Class~\ref{error:bitwise-copy}.

\subsection{Address-Dependent Object Encodings}
\label{sec:address_dependent}
%-----------------------------------------------

To prevent errors of Class~\ref{error:bitwise-copy}, copies by wrong
means must be detected on non-trivially copyable data
types~\std{\S3.9(2)}.  Plain object encodings cannot detect these
copies, because \cref{eq:semantic:inverse} in \cref{def:sem-struct}
requires $s.\mathit{from\_byte}$ to be defined for all byte lists that
are equal to a valid object representation.

Address-dependent object encodings are able to recognize most
(but not all) object copies obtained by bitwise memory copy
operations. For address-dependent object encodings we
augment the two conversion functions with an additional address parameter~$a$,
and adjust the left-inverse requirement of \cref{eq:semantic:inverse}
accordingly:
\begin{equation}
\label{eq:semantic:inverse_address}
  \forall a \in A.\ \mathit{from\_byte}(\mathit{to\_byte}(v, a), a)
  \ =\ v
\end{equation}
Address-dependent encodings generalize plain object encodings by
allowing a different encoding for each address.  They can therefore
detect all errors from
Classes~\ref{error:uninitialized}--\ref{error:partial-read}.  Errors
from Class~\ref{error:bitwise-copy} can be detected as long as the
bitwise copy is located at an address that is different from the
address of the original object.  This includes type errors caused by
aliasing between different virtual addresses.  However,
\cref{eq:semantic:inverse_address} prevents address-dependent object
encodings from detecting those errors of
Class~\ref{error:bitwise-copy} that overwrite memory with a bitwise
copy of an object previously stored at the same address.

\subsection{External-State Dependent Object Encodings}
\label{sec:external_state}
%------------------------------------------------------
%\clearpage

External-state dependent object encodings are the most complex
data-type semantics that we consider in this paper. They can detect type
errors from all classes discussed in \cref{sec:type_errors}, but
require further additions to the definition of semantic structures.

\subsubsection{Exploiting Protected Bits.}
\label{sec:protected_bit_encoding}
%~~~~~~~~~~~~~~~~~~~~~~~~~~~~~~~~~~~~~~~~~~~~~~~~~~~~~~

In general, error detection is easy if a part of the object
representation is protected and cannot be overwritten by
erroneous operations. One only has to make sure that the set
$\mathbb{S}^T$ contains semantic structures that store some kind
of hash in the object representation. Then, when the unprotected
part of the object representation is changed, the hash is wrong
and \textit{from\_byte} will fail. External-state dependent
object encodings develop this observation to the extreme. We will
first see that it is sufficient to protect one bit only. After
that, we will enrich the definition of semantic structures to
make sure that there is always one protected bit.

Consider a type $T$ and a set of semantic structures $\{ s^a_v
\mid a \in A, v \in V \}$ that all have the same set of values
$V$ and addresses $A$ and that all use the same object encodings,
except for the first bit. The first bit of
$s^a_v.\mathit{to\_byte}(v', a')$ is $1$ if $a = a'$ and $v = v'$
and $0$ otherwise. The function $s^a_v.\mathit{from\_byte}$ fails
if the first bit is different from what was specified for
\textit{to\_byte}. That is, every $s^a_v$ protects just the value
$v$ at address $a$ by setting the first bit of the object
representation and leaves all other value/address combinations
unprotected.

Consider now a memory copy operation that copies the object
representation of $v$ from address $a$ to a different address
$a'$ but leaves the first bit at address $a'$ intact. If this bit
is $0$ then $\mathit{from\_byte(a', \ldots)}$ from structure
$s^{a'}_v$ will fail. If the first bit at $a'$ was $1$,
$s^{a'}_v$ will succeed, but all other structures will fail. In
case $a$ and $a'$ are the same address, the memory remains
(provably) unchanged, so there is no error to detect. However, if
the value $v$ at $a'$ is overwritten with the object
representation of a value $v'$ that was previously stored there,
either $s^{a'}_{v'}$ or $s^{a'}_v$ will detect the error in case
the first bit at ${a'}$ remains unchanged.

We can conclude that a sufficiently large set~$\mathbb{S}^T$ can
detect all type errors from all classes, provided there is at least
one protected bit that no erroneous memory modification can change.

\subsubsection{Protecting Bits in External State.}
\label{sec:protecting_external_bits}
%~~~~~~~~~~~~~~~~~~~~~~~~~~~~~~~~~~~~~~~~~~~~~~~~~~~~~~

We will now enrich semantic structures such that every object
representation can potentially contain one additional bit. With a
clever use of underspecification this will require only one
additional bit of memory per program. In a last step we will
protect this one bit by making its location unknown.

We first enrich semantic structures with a partial function
\textit{protected\_bit}:
\begin{align*}
  \mathit{protected\_bit}\, \colon&\quad A \rightharpoonup \mathit{B} \\
  \mathit{to\_byte}\,\colon&\quad V \times A
  \to \mathit{list}[\mathit{byte}] \times \mathit{bit} \\
  \mathit{from\_byte}\,\colon&\quad
    \mathit{list}[\mathit{byte}] \times A \times \mathit{bit} \rightharpoonup V
\end{align*}
Here, $\mathit{B}$ is the set of bit-granular addresses of the underlying
memory and \textit{bit} is the type of bits. The idea is as follows:
If $s.\mathit{protected\_bit}(a) = \mathit{b}$ then the structure $s$
uses an additional bit of object representation at address $b$ for
values stored at address $a$. In this case \textit{to\_byte} returns
this additional bit and \textit{from\_byte} expects it as third
argument. If $\mathit{protected\_bit}(a)$ is undefined, no additional
bit is used and \textit{to\_byte} returns a dummy bit. The consistency
requirement of semantic structures (\cref{eq:semantic:inverse} in
\cref{def:sem-struct}) is changed in the obvious way. In the
verification environment (\cref{fig:semantics-stack}) the memory model
must of course be adapted to handle the additional bit appropriately.

There are of course problems if the address returned by
\textit{protected\_bit} is already in use. We solve this in
several steps. We first require that $\mathit{protected\_bit}$ is
defined for at most one address for every structure $s$. This
restriction does not hurt because $\mathbb{S}^T$ can still
contain one structure for every address~$a$ such that
$\mathit{protected\_bit}(a)$ is defined. Next, recall from
\Cref{sec:semantics} that for each primitive type $T$ a fixed but
arbitrarily chosen $s^T$ is used. We refine this choice such that
there is at most one primitive type $T$ for which
$s^T.\mathit{protected\_bit}$ is defined for one address. Again,
this latter restriction does not limit the checking powers,
because for each type every address still can
potentially use an additional bit.

As a last step consider the set $\mathit{AF}$ of free, unused bit granular
addresses.\footnote{%
  For practical purposes one can use a safe
  approximation of $\mathit{AF}$.
} %
The memory model is enriched with a constant $r \in \mathit{AF}$ that is
used precisely when the only additional bit that is used by the
current choice of semantic structures is outside of $\mathit{AF}$. In this
case, the memory model silently swaps the contents of $r$ and the
additional bit.

The changes for using protected bits are rather complex. However,
if $\mathit{AF}$ is not empty and if the sets $\mathbb{S}^T$ are
sufficiently large, then there is for each type $T$ and each
address $a$ a choice of semantic structures such that values of
type $T$ at address $a$ use an additional bit in the object
representation. If, for every
accessible bit
address~$b$, every~$\mathbb{S}^T$ contains
a structure that uses~$b$ as additional bit, then
the location of the additional bit is de~facto unknown. Under
these circumstances protected bits can detect all type errors
from all classes of \cref{sec:type_errors} as 
long as it is not the case that the complete memory is
overwritten. 

There are two points to note about external-state dependent object
encodings. Firstly, the protected bit in these encodings is not write
protected in a general sense. Type-correct operations that use the
chosen semantic structure do in fact change the protected
bit. Secondly, we used single bits and bit-granular addresses here
only because we assume a memory model that resembles real
hardware. The same idea can be applied to more abstract memory models.

%%%%%%%%%%%%%%%%%%%%%%%%%%%%%%%%%%%%%%%%%%%%%%%%%%%%%%%%%%%%%%%%%%%%%%%%%%%%%%%%

\section{Towards a Type-Sensitivity Theorem}
\label{sec:interplay}
%============================================

In \cref{sec:type_safety}, we carved out type sensitivity as the key
property that ensures there are sufficiently diverse semantic
structures to identify all type errors.  We now take a closer look at
the delicate interplay between compiler intelligence, additional
assumptions, and type sensitivity.  We give sufficient conditions for
type sensitivity for the error classes discussed in
\cref{sec:type_errors}.  These entail construction guidelines for
sufficiently rich sets~$\mathbb{S}^T$.

The relationship between semantic structures and type errors that are
ruled out by verification turns out to be intricate.  Intuitively, one
might expect type sensitivity to be monotone: more semantic structures
can detect more type errors.  Unfortunately, more semantic structures
also give rise to more program executions, and can therefore cause
undetected type errors.

For instance, consider a memory-mapped device that overwrites memory
at an address~$a$.  A program that performs a read access of type~$T$
will be unaffected by this modification if alignment requirements
ensure that objects of type~$T$ are never located at~$a$.  Relaxing
these alignment requirements, however, might lead to a type error in
certain program executions: namely in those that read at~$a$.  To
remain type sensitive, the data-type semantics would then need to
admit a semantic structure that can detect the modification \emph{and}
allows alignment at~$a$.

To be able to detect a memory modification as a type error (without
resorting to external state), we have to assume some degree of
independence between the modification and
the semantic structures fixed for program
verification.  The classification in \cref{sec:type_errors} describes
different degrees of data independence.  In this section, we
additionally assume that modifications occur at fixed addresses,
independent of the choice of semantic structure for~$T$.

\begin{figure}
%  \vspace{2mm}%manual layout
  \begin{center}
    \includegraphics[width=.48\textwidth]{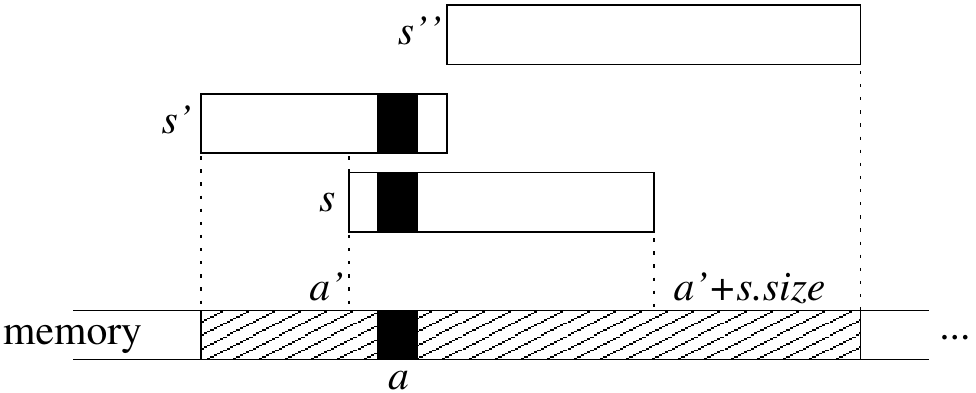}
    \caption{\label{fig:visibility}Visibility of addresses to semantic
      structures: $a$ is visible to $s$ and $s'$ but not to $s''$.}
%\vspace{-1mm}
  \end{center}
\end{figure}

We say that an address~$a$ is \emph{visible to a semantic
  structure~$s$} if there exists $a' \in s.A$ such that $a' \leq a <
a'+s.\mathit{size}$; see \Cref{fig:visibility} for illustration.  In
other words, $a$ is visible to~$s$ if $s$ might read memory at~$a$. We
say that $a$ is \emph{visible} if there is a semantic structure $s \in
\mathbb{S}^T$ such that $a$ is visible to~$s$.

\begin{lemma}[Unspecified Memory]\label{lemma:uninitialized}
Assume that for every visible address~$a$, there is a semantic
structure $s \in \mathbb{S}^T$ and an address $a' \in s.A$ (with $a'
\leq a < a'+s.\mathit{size}$) such that for every sequence of bytes
$(b_i)_{i=0}^\infty$, there is a byte value~$b$ such that
$s.\mathit{from\_byte}$ is undefined for the byte list $[b'_{a'},
  \ldots, b'_{a'+s.\mathit{size}-1}]$ given by $b'_a := b$, $b'_i :=
b_i$ for $i \neq a$.  Then $\mathbb{S}^T$ is type sensitive
wrt.\ unspecified memory contents (Class~\ref{error:uninitialized}).
\end{lemma}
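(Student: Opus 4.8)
The plan is to establish type sensitivity through the mechanism highlighted in \cref{sec:semantics}: since verification is carried out against an arbitrary but fixed $s^T \in \mathbb{S}^T$ and succeeds only if it succeeds for \emph{every} choice, it suffices to produce, for each Class~\ref{error:uninitialized} modification and each subsequent type-$T$ read of the modified object, a \emph{single} structure $s \in \mathbb{S}^T$ whose $\mathit{from\_byte}$ is undefined on the modified representation. Such an $s$ makes the semantics get stuck on that read, rendering normal termination unprovable; by contraposition, normal termination implies the read was untouched by the modification, which is exactly the stated type sensitivity.

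So I would argue as follows. Fix a Class~\ref{error:uninitialized} modification at an address $a$, together with a subsequent type-$T$ read of the modified object. For the read to concern the modified byte, $a$ must lie inside the read region, hence $a$ is visible, and the hypothesis applies: it yields a structure $s \in \mathbb{S}^T$ and a base address $a' \in s.A$ with $a' \leq a < a'+s.\mathit{size}$, so that $s$ reads an object occupying $[a', a'+s.\mathit{size})$, which contains $a$. Every byte in this window other than the one at $a$ is determinate---fixed by the program state preceding the read---so I record these contents as a byte sequence $(b_i)_{i=0}^\infty$ (choosing the values outside the window arbitrarily). Instantiating the detection clause of the hypothesis with this sequence produces a byte value $b$ such that $s.\mathit{from\_byte}$ is undefined on $[b'_{a'}, \ldots, b'_{a'+s.\mathit{size}-1}]$, where $b'_a := b$ and $b'_i := b_i$ for $i \neq a$.

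It remains to connect this witness to the indeterminacy of byte $a$. Because a Class~\ref{error:uninitialized} modification leaves the value at $a$ indeterminate, any proof of normal termination must cover \emph{every} admissible byte value there, in particular $b$; but for that value $s.\mathit{from\_byte}$ is undefined, so the read gets stuck under $s$ and verification fails. This $s$ is the suitable detecting structure demanded by type sensitivity, completing the argument. The main obstacle is precisely this last alignment step: matching the universally quantified surrounding bytes in the hypothesis against the actual determinate memory content around $a$, and recognising that ``indeterminate'' forces verification to quantify over all values at $a$---so a single witness $b$ suffices to break termination. The quantifier shape of the hypothesis (for all surrounding bytes, there exists a bad byte at $a$) is tailored to exactly this instantiation, which is what makes the reduction go through.
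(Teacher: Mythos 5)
Your proposal is correct and follows essentially the same argument as the paper's own proof: invoke visibility to obtain the structure $s$ and base address $a'$ from the hypothesis, instantiate the universally quantified byte sequence with the actual memory contents surrounding $a$, and use the indeterminacy of the byte at $a$ to conclude it might equal the witness $b$, so $s.\mathit{from\_byte}$ might be undefined and normal termination becomes unprovable. The extra framing you add (verification against every $s^T \in \mathbb{S}^T$, contraposition) is exactly the mechanism the paper sets up in its discussion of type sensitivity, so there is no substantive difference.
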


\begin{proof}
Assume that an unspecified byte value at address~$a$ is read with
type~$T$.  Because $a$ is visible, there is a semantic structure~$s
\in \mathbb{S}^T$ as postulated in the lemma.  This structure might
read at address~$a'$.  Let $(b_i)_{i=0}^\infty$ be the memory contents
at the time of the read.  Since $b_a$ is unspecified, it might be
equal to~$b$.  Hence $s.\mathit{from\_byte}$ might read the byte list
$[b'_{a'}, \ldots, b'_{a'+s.\mathit{size}-1}]$, for which it is
undefined.  Therefore, normal program termination is no longer
provable.%\hfill$\qed$
\end{proof}
For instance, $\mathbb{S}^\text{\lstinline[style=C++inline]{bool}} :=
\{ s^\text{\lstinline[style=C++inline]{bool}}_{\mathtt{0}, \mathtt{1}}
\}$ (with $s^\text{\lstinline[style=C++inline]{bool}}_{\mathtt{0},
  \mathtt{1}}$ as defined on page~\pageref{lab:s_bool}) is type
sensitive wrt.\ unspecified memory contents, because
$s^\text{\lstinline[style=C++inline]{bool}}_{\mathtt{0},
  \mathtt{1}}.\mathit{from\_byte}$ is undefined for some (in fact, for
all but two) byte lists of length one.

%% Additional assumptions can easily constrain~$\mathbb{S}^T$ so that no
%% semantic structure fulfills the requirements of
%% \cref{lemma:uninitialized}.  For instance, if we assume that the
%% object representation of \lstinline[style=C++inline]{unsigned int} has
%% a size of 32~bits, and that objects of this type can hold all values
%% from $0$ to $2^{32}-1$, then for any $s \in
%% \mathbb{S}^\mathtt{unsigned\ int}$, $s.\mathit{from\_byte}$ is total
%% simply for cardinality reasons.  Fortunately, one can often relax
%% these assumptions, \eg by also allowing semantic structures where
%% \lstinline[style=C++inline]{unsigned int} occupies more than four
%% bytes.

The following lemmas have similarly straightforward proofs, which we
omit for space reasons.  For constant byte values
(Class~\ref{error:constant}), the only difference to
\cref{lemma:uninitialized} is that \emph{any} byte value~$b$ must now
be detected as an error.  In particular, any semantics that is type
sensitive wrt.\ constant byte values is also type sensitive
wrt.\ unspecified memory contents.

\begin{lemma}[Constant Bytes]\label{lemma:constant}
Assume that for every visible address~$a$, and for every byte
value~$b$, there is a semantic structure $s \in \mathbb{S}^T$ and an
address $a' \in s.A$ (with $a' \leq a < a'+s.\mathit{size}$) such that
for every sequence of bytes $(b_i)_{i=0}^\infty$,
$s.\mathit{from\_byte}$ is undefined for the byte list $[b'_{a'},
  \ldots, b'_{a'+s.\mathit{size}-1}]$ given by $b'_a := b$, $b'_i :=
b_i$ for $i \neq a$.  Then $\mathbb{S}^T$ is type sensitive
wrt.\ constant byte values (Class~\ref{error:constant}).
\end{lemma}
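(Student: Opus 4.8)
The plan is to mirror the proof of \cref{lemma:uninitialized} almost verbatim, tracking the single change the authors flag: the quantifier structure of the hypothesis. First I would set up the scenario of a Class~\ref{error:constant} modification, namely a state transition that writes a fixed constant byte value~$b$ at address~$a$, after which a program performs a read at~$a$ with type~$T$. Since the read takes place at~$a$ with type~$T$, the address~$a$ is visible, so the hypothesis of the lemma applies there.

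Next I would instantiate the hypothesis at this particular visible address~$a$ \emph{and} at the particular constant~$b$ that the modification wrote. This yields a semantic structure $s \in \mathbb{S}^T$ and an address $a' \in s.A$ with $a' \leq a < a'+s.\mathit{size}$ such that, for every surrounding memory $(b_i)_{i=0}^\infty$, the function $s.\mathit{from\_byte}$ is undefined on the byte list $[b'_{a'}, \ldots, b'_{a'+s.\mathit{size}-1}]$ determined by $b'_a := b$ and $b'_i := b_i$ for $i \neq a$. Taking $(b_i)_{i=0}^\infty$ to be the actual memory contents at the time of the read, the slice that $s.\mathit{from\_byte}$ would consume starting at~$a'$ is exactly this byte list, because $b_a = b$ holds by the definition of the modification. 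Hence reading at~$a$ with type~$T$ under structure~$s$ gets stuck, so normal program termination is not provable for the fixed-but-arbitrary choice $s^T$, and type sensitivity with respect to Class~\ref{error:constant} follows.

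The only conceptual point --- and the reason the hypothesis of this lemma is strictly stronger than that of \cref{lemma:uninitialized} --- is where the detecting value~$b$ comes from. In the unspecified-memory case the stored byte is indeterminate, so one is free to argue over the execution in which it happens to equal the single bad value that the structure rejects. Here the stored byte is a fixed constant dictated by the modification, so no such freedom exists; the hypothesis must instead guarantee a detecting structure for \emph{every} byte value~$b$, and the proof simply selects the one matching the constant actually written. I do not expect any real obstacle beyond making this quantifier shift explicit, since the remaining reasoning is identical to that of \cref{lemma:uninitialized}.
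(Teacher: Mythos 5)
Your proof is correct and follows exactly the route the paper intends: it omits the proof of \cref{lemma:constant} but states that it differs from \cref{lemma:uninitialized} only in that \emph{any} byte value~$b$ must be detected, and your argument mirrors that proof while correctly instantiating the hypothesis at the particular constant~$b$ written by the modification (rather than choosing $b$ to match indeterminate memory, as in the unspecified-memory case). Your closing remark about the quantifier shift is precisely the distinction the paper itself highlights, so there is nothing to add.
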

\Cref{sec:semantics} exemplifies how a sufficiently rich set $\mathbb{S}^T$ can be
obtained by inclusion of sufficiently many semantic structures such
that no byte list is in the domain of all $\mathit{from\_byte}$
functions.

For Class~\ref{error:implicit-casts}, we restrict ourselves to those
semantic structures $s^T \in \mathbb{S}^T$ that read exactly one
object representation produced by a semantic structure $s^U \in
\mathbb{S}^U$ for some type~$U$.  Partial overlaps between object
representations are covered by Class~\ref{error:partial-read}.  We
assume that the object representation for~$U$ does not depend on the
choice of semantic structure for~$T$.

\begin{lemma}[Implicit Casts]\label{lemma:implicit-casts}
Assume that for every semantic structure $s^T \in \mathbb{S}^T$, every
address $a \in s^T.A$, and every byte list $[u_a, \ldots,
  u_{a+s^T.\mathit{size}-1}]$ that is the result of
$s^U.\mathit{to\_byte}(v, \ldots)$ for some $s^U \in \mathbb{S}^U$, $v
\in s^U.V$, there is a semantic structure $s \in \mathbb{S}^T$ and an
address $a' \in s.A$ such that for every sequence of bytes
$(b_i)_{i=0}^\infty$, $s.\mathit{from\_byte}$ is undefined for the
byte list $[b'_{a'}, \ldots, b'_{a'+s.\mathit{size}-1}]$ given by
$b'_i := u_i$ for $a \leq i < a+s^T.\mathit{size}$, $b'_i := b_i$
otherwise.  Then $\mathbb{S}^T$ is type sensitive wrt.\ implicit casts
from type~$U$ (Class~\ref{error:implicit-casts}).
% Tweak to introduce a label for Lemma 4
%\refstepcounter{lemma}
%\label{lemma:partial-read}
\end{lemma}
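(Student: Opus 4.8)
The plan is to prove the contrapositive, exactly as in the proof of \cref{lemma:uninitialized}: assuming a modification from Class~\ref{error:implicit-casts} has altered the memory that is subsequently read with type~$T$, I would exhibit one admissible semantic structure under which the read gets stuck, so that normal program termination cannot be provable for \emph{every} choice of structure. Concretely, suppose the modification writes a complete object representation $[u_a, \ldots, u_{a+s^T.\mathit{size}-1}] = s^U.\mathit{to\_byte}(v, \ldots)$ of some $s^U \in \mathbb{S}^U$ (with $s^U.\mathit{size} = s^T.\mathit{size}$) at a fixed address~$a$, and that a $T$-typed read of this region then takes place.

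First I would fix the structure~$s^T \in \mathbb{S}^T$ against which verification is carried out; since the read occurs at~$a$, we have $a \in s^T.A$. The byte list $[u_a, \ldots, u_{a+s^T.\mathit{size}-1}]$ is by construction the image of $s^U.\mathit{to\_byte}$, so the hypothesis of the lemma applies to this~$s^T$, this~$a$, and this byte list. I would instantiate it with the sequence $(b_i)_{i=0}^{\infty}$ given by the \emph{actual} memory contents at the time of the read, obtaining a witness structure $s \in \mathbb{S}^T$ and an address $a' \in s.A$ for which $s.\mathit{from\_byte}$ is undefined on the list $[b'_{a'}, \ldots, b'_{a'+s.\mathit{size}-1}]$.

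The decisive observation is that this list coincides with what~$s$ would actually read at~$a'$. Because the modification overwrote exactly the range $[a, a+s^T.\mathit{size})$ with the bytes~$u_i$, the real memory agrees with the $b'_i$ of the hypothesis: $b'_i = u_i$ inside the modified range and $b'_i = b_i$ outside it. Hence, had verification fixed~$s$ rather than~$s^T$ for type~$T$, the read at~$a'$ would invoke $s.\mathit{from\_byte}$ on a byte list for which it is undefined, and the semantics would get stuck. As explained in \cref{sec:semantics}, a single such $s \in \mathbb{S}^T$ suffices to render normal program termination unprovable, which is precisely type sensitivity wrt.\ implicit casts from type~$U$ (Class~\ref{error:implicit-casts}).

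The main obstacle I anticipate is bookkeeping around the quantifiers and the independence assumptions rather than any deep argument. I must take care that the witness~$s$ may differ from~$s^T$ in both size and alignment, so the overlap between the window $[a', a'+s.\mathit{size})$ read by~$s$ and the modified range $[a, a+s^T.\mathit{size})$ has to be handled through the general $b'_i$ construction of the hypothesis, not by assuming $a' = a$ or $s.\mathit{size} = s^T.\mathit{size}$. I also rely on the standing assumption of \cref{sec:interplay} that the written representation and its address~$a$ are fixed independently of the choice of structure for~$T$; this is what legitimizes switching from~$s^T$ to~$s$ while keeping the modified bytes~$u_i$ unchanged. Finally, I would record that we consider here only reads of a single complete object representation, partial overlaps being deferred to Class~\ref{error:partial-read}, so that the modified range matches $s^T.\mathit{size}$ exactly.
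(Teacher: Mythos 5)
Your proposal is correct and follows exactly the route the paper intends: the paper omits this proof, stating only that it is ``similarly straightforward'' to that of \cref{lemma:uninitialized}, and your argument is precisely that adaptation --- fix the structure $s^T$ and read address $a$, instantiate the hypothesis with the actual memory contents, and exhibit the witness $s \in \mathbb{S}^T$ whose $\mathit{from\_byte}$ is undefined on what it would read at $a'$, rendering normal termination unprovable. You also correctly identify and invoke the two independence assumptions (fixed modification address, and the $U$-representation not depending on the choice of structure for $T$) that make the switch from $s^T$ to $s$ legitimate.
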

To construct a set $\mathbb{S}^T$ that fulfills the assumptions of the
preceding lemma, one can include a set~$S$ of non-total semantic
structures that are closed wrt.\ permutation of undefined object
representations.  Given a non-total semantic structure $s$ where
$s.\mathit{from\_byte}(bl, \ldots)$ is undefined, we can construct
such a set $S$ if we include for all byte lists $bl'$ the semantic
structure $s'$ that is identical to $s$ except that
$s'.\mathit{from\_byte} = \Pi_{bl,bl'} \circ s.\mathit{from\_byte}$
and $s'.\mathit{to\_byte} = s.\mathit{to\_byte} \circ \Pi_{bl,
  bl'}$. Here, $\Pi_{bl, bl'}$ is the permutation function that just
exchanges $bl$ with $bl'$.

%% \begin{floatingfigure}{.3\textwidth}
%%   \begin{center}
%%     \includegraphics[width=.28\textwidth]{bitwise-copy}
%%     \caption{\label{fig:bitwise-copy} Byte-wise copy}
%%   \end{center}
%% \end{floatingfigure}

It is straightforward to generalize \cref{lemma:implicit-casts} to
parts of valid object representations (Class~\ref{error:partial-read})
by allowing $[u_a, \ldots, u_{a+s^T.\mathit{size}-1}]$ to be (an
arbitrary slice of) a concatenation of object representations for
other types~$U_i$. We omit the formal statement of this lemma.

To detect Class~\ref{error:bitwise-copy} errors, we have to further
relax our independence requirements between type errors and semantic
structures by considering also copies of object representations
for~$T$ at visible addresses~$a$.
%% the composition of read byte lists from the
%% bits of valid object representations of the read type $T$ 
%% (see \cref{fig:bitwise-copy}). 
We say that two semantic structures $s_1$ and $s_2$ are
\emph{equivalent}, $s_1 \sim s_2$, if they differ at most in their
$\mathit{to\_byte}$, $\mathit{from\_byte}$ functions. Equivalent
semantic structures produce and interpret object representations of
the same size and at the same set of addresses.

\begin{lemma}[Bitwise Copy]\label{lemma:bitwise-copy}
Assume that for every semantic structure $s \in \mathbb{S}^T$ and
every $a \in s.A$ there exists an equivalent semantic structure $s'
\in \mathbb{S}^T$ such that for any byte list $bl := [b_a, \ldots, b_{a +
s.\mathit{size} - 1}]$ where $b_i$, $i \in [a, a + s.\mathit{size})$
may be comprised of copies of bit value of an object representation
$s.\mathit{to\_byte}(v, \ldots)$ for some value $v \in s.V$, the result
of $s'.\mathit{from\_byte}(bl, \ldots)$ is undefined if we replace the
copied bits with the respective value of $s'.\mathit{to\_byte}(v,
\ldots)$.  Then $\mathbb{S}^T$ is type-sensitive wrt.\ bitwise copies of
a non-trivially copyable object (Class~\ref{error:bitwise-copy}).
\end{lemma}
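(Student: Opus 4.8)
The plan is to mirror the template established by the proofs of \cref{lemma:uninitialized} and \cref{lemma:constant}: assume a bitwise copy followed by a read of the copied object with type~$T$, and then exhibit a \emph{single} verification structure for which $\mathit{from\_byte}$ becomes undefined on the resulting byte list, so that normal program termination is no longer provable. Recall from \cref{sec:semantics} that verification succeeds only if it succeeds for every choice of structure in~$\mathbb{S}^T$; hence producing one structure under which the read gets stuck is enough to conclude that the copy is detected, i.e.\ that $\mathbb{S}^T$ is type sensitive wrt.\ \cref{error:bitwise-copy}.

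Concretely, I would fix the offending execution as follows. Suppose the program bitwise-copies the bits of an object representation of some value $v \in s.V$ and afterwards reads the resulting byte list $bl = [b_a, \ldots, b_{a + s.\mathit{size} - 1}]$ at a visible start address $a \in s.A$ with type~$T$, where $s$ is the structure under which the copied object was encoded (so the copied bits are slices of $s.\mathit{to\_byte}(v, \ldots)$). The hypothesis of the lemma then hands us an \emph{equivalent} structure $s' \in \mathbb{S}^T$ --- equivalent in the sense $s \sim s'$, hence sharing the same address set and $\mathit{size}$, so in particular $a \in s'.A$ --- whose $\mathit{from\_byte}$ is undefined on the byte list $bl'$ obtained from $bl$ by replacing each copied bit with the corresponding bit of $s'.\mathit{to\_byte}(v, \ldots)$.

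The key step is to observe that the physical copy operation is oblivious to the data-type semantics: it moves raw bits between fixed addresses, independently of the encoding. Consequently, re-running verification against $s'$ rather than $s$ leaves the source address, the target address~$a$, and the copied value~$v$ unchanged (this is exactly what $s \sim s'$ provides), while the bits that are actually written, copied, and read are now those dictated by~$s'$. Thus, under the choice $s'$, the memory occupying $[a, a + s'.\mathit{size})$ after the copy is precisely $bl'$, and the subsequent read invokes $s'.\mathit{from\_byte}(bl', \ldots)$, which by assumption is undefined. The semantics therefore gets stuck, normal program termination is unprovable for the structure~$s'$, and the bitwise copy is detected as a type error.

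I expect the main obstacle to be making this structure-switching argument precise, because --- unlike in the earlier classes --- the modified bytes are \emph{not} independent of the chosen semantic structure: they are produced by $\mathit{to\_byte}$ and hence change when the verification structure changes. The equivalence relation $s \sim s'$ is what reconciles this tension, since it pins down the addressing data (alignment, $\mathit{size}$, and in particular the read address~$a$) so that the same physical copy remains meaningful, while leaving the encoding/decoding functions free to differ enough that $s'.\mathit{from\_byte}$ rejects the $s'$-encoded copy. I would close by remarking that the hypothesis itself does all the combinatorial work, and that the existence of a concrete $\mathbb{S}^T$ meeting it --- which is delicate precisely when source and target addresses coincide --- is what the protected-bit construction of \cref{sec:external_state} is designed to supply.
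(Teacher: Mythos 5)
Your proof is correct and takes essentially the same route the paper intends: the paper omits this proof as ``similarly straightforward'' to that of \cref{lemma:uninitialized}, and your argument is exactly that template instantiated for bitwise copies---fix the offending copy-then-read execution, invoke the hypothesis to obtain the equivalent structure $s'$, and use $s \sim s'$ to keep addresses, size, and the copied value fixed while the physically copied bits become those of $s'.\mathit{to\_byte}(v,\ldots)$, so that $s'.\mathit{from\_byte}$ is undefined on the resulting byte list and normal termination becomes unprovable. Your closing observations also match the paper's own remarks following the lemma: exact copies at the read address cannot be detected because of \cref{eq:semantic:inverse}, and the hypothesis is discharged by the external-state dependent, protected-bit construction of \cref{sec:external_state}.
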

Clearly, if the copy is exact in the sense that $bl =
s'.\mathit{to\_byte}(v, \ldots)$ for some value $v \in s.V$,
\cref{eq:semantic:inverse} rules out the existence of a semantic
structure $s'$ for which $s'.\mathit{from\_byte}(v, \ldots)$ is
undefined. For the same reason, there can be no semantic structure
with an address dependent encoding that detects
Class~\ref{error:bitwise-copy} errors if $bl$ is a valid object
representation for the read address $a$.

In \cref{sec:external_state}, we described external-state dependent
encodings that are able to fulfill the assumptions of
\cref{lemma:bitwise-copy}.  The proof that external-state dependent
encodings are type-sensitive wrt.\ all error classes is lengthy but
not difficult. It builds on the fact that for every address $a$
there exists a choice of semantic structures such that values at
address $a$ are protected with one additional bit of object
representation, see \cref{sec:external_state}.

%-------------------------------------------------------------------------------

\section{Related Work}
\label{sec:related_work}
%========================

In spirit, the work presented here is very similar to runtime type
checking, as it is present in dynamically typed programming languages
such as Lisp and Perl. The runtime system of such languages attaches
type tags to all values, and uses them for type checking at runtime.
There are also tools that perform extended static or dynamic type
checking for C and \cpp{} programs by source or object code
instrumentation~\cite{Burrows-run-time-check,Loginov-run-time-check}.
One can view each element of $\mathbb{S}^T$ as a runtime system that
performs a particular type check.  While runtime type checking can
practically only be done for a limited number of program runs, this
paper analyzes verification techniques that apply to all possible
program runs.

There are several proposals to enhance the type safety of~C.
Cyclone~\cite{cyclone} introduces additional data types such as safe
pointers.  BitC~\cite{bitc} augments a type-safe dialect of~C with
explicit placement and layout controls to reduce the number of
situations where low-level code has to break the type system.  A
strength of underspecified data-type semantics is the ability to
re-establish type safety when such a break is inevitable.

As mentioned in \cref{sec:intro}, several similar data-type semantics
for C or \cpp{} have been discussed in the literature. The
formalization of C with abstract state machines by Gurevich and
Huggins~\cite{DBLP:conf/csl/GurevichH92} and Norrish's \cpp{}
semantics in HOL4~\cite{Norrish:CPP2008} both rely on partial
functions to convert byte lists to typed values.

The idea to reflect the underspecification of the \cpp{} standard in
the data-type semantics, and to exploit this underspecification for
type checking, was first proposed in the context of the VFiasco
project~\cite{vfiasco} by Hohmuth and Tews~\cite{vfiasco-types}.  This
idea has then been independently further developed in the
operating-system verification projects l4.verified~\cite{l4_verified}
and Robin~\cite{tews08verification}.

For l4.verified, Tuch et~al.\ built a typed memory on top of untyped
memory~\cite{tuch07types}. This typed view on memory can be used to
automatically discharge type-correctness conditions for type-safe code
fragments.

\section{Conclusions}
\label{sec:conclusions}
%=======================

In this paper, we explored the ability of underspecified data-type
semantics to enforce type-correctness properties in verification
settings that rely on untyped byte-wise organized memory.  We have
identified five different classes of type errors, and proved
sufficient conditions for recognizing all type errors from each class.
This required increasingly complex data-type semantics.  
% \\
% \textbf{XXX REV 2, OLD:}
% Notably,
% bitwise copies of non-trivially copyable objects can only be detected
% with external-state dependent object encodings, showing that simple
% underspecified data-type semantics are unsound for such types.
% \\\textbf{XXX NEW:}
Notably,
simple
underspecified data-type semantics are unsound for non-trivially
copyable types. Bitwise copies of such types can only be detected
with external-state dependent object encodings. The trade-off
between using such complex data-type semantics or dealing with
errors from class~\ref{error:bitwise-copy} by other means must be
decided for each verification individually.

Although our analysis is inspired by C and \cpp{}, our results are
largely programming-language independent.  They apply to all programs
that cannot be statically type-checked.  To demonstrate the practical
relevance of our analysis, we verified the type safety of a small code
fragment from an operating-system kernel in PVS (see
\cref{sec:case_study}).  Our PVS files are publicly available (see
\cref{lab:pvs-source} on page~\pageref{lab:pvs-source}).

Giving a fully accurate, sound data-type semantics for the
verification of C and \cpp{} code remains a challenge.  The language
standards make few guarantees in order to permit efficient
implementations on a wide range of hardware architectures. Yet the
type systems are complex, there are subtle constraints on memory
representations and type domains, and the typed and untyped views on
memory interact in intricate ways.

%%%%%%%%%%%%%%%%%%%%%%%%%%%%%%%%%%%%%%%%%%%%%%%%%%%%%%%%%%%%%%%%%%%%%%%%%%%%%%%%

\bibliographystyle{eptcs}
\bibliography{own}

%%%%%%%%%%%%%%%%%%%%%%%%%%%%%%%%%%%%%%%%%%%%%%%%%%%%%%%%%%%%%%%%%%%%%%%%%%%%%%%%

\begin{appendix}
\section{Verifying Safe and not so Safe Kernel Code}
\label{sec:case_study}
%====================================================

\begin{figure}[t]
\begin{center}
%\begin{lstlisting}[style=C++,multicols=2,numbers=left,escapechar=@]
% XXX eptcs problem
\begin{lstlisting}[style=C++,numbers=left,escapechar=@]
@\label{lst:scheduler-line-inherit-trick}@class TCB : public list<TCB> {
public:
  unsigned char priority;
  Msg_Buffer mr;
  ...
  @\label{lst:current}@static inline TCB * current(){
    unsigned long dummy;
    @\label{lst:scheduler-line-asm}@asm volatile ("mov %%esp, %0 \n\t" @\label{lst:scheduler-line-asm-end}@: "+m" (dummy) ::);
    @\label{lst:scheduler-line-cast}@return reinterpret_cast<TCB*> @\label{lst:scheduler-line-shift}@(dummy & ~(1 << L2_TCB_SIZE)); 
  }
};
@\label{lst:scheduler-line-priority-list}@list<TCB> prio_list[Max_Prio];
void 
copy(TCB * dest, unsigned long cnt){
  TCB * src = TCB::current();
  @\label{lst:memcpy_error}@memcpy(src, dest, cnt);
}

...
// preempt current thread
@\label{lst:scheduler-line-current}@TCB * current = TCB::current();
@\label{lst:scheduler-line-read}@unsigned char p = current->priority;

@\label{lst:scheduler-line-insert}@prio_list[p].push_back(current);
...
\end{lstlisting}
\caption{Excerpt of a simple IPC send operation and the kernel code
  that is executed when the current thread is preempted.}
\label{fig:case_study}
\end{center}
\end{figure}

To demonstrate our approach, we have verified termination and hence
type safety of a small piece of microkernel code
(\cref{fig:case_study}). The code in
\crefrange{lst:scheduler-line-current}{lst:scheduler-line-insert} is
part of the scheduler. Upon preemption, it inserts the thread control
block (TCB) of the currently running thread at the back of the
doubly-linked priority list. Also shown but not verified is a
simplified version of the copy routine of the inter-process
communication path. To demonstrate the error checking capabilities, we
modified the call to \lstinline[style=c++inline]{memcpy} in
\cref{lst:memcpy_error} to copy the first
\lstinline[style=c++inline]{cnt} bytes from the sender TCB rather than
from its message buffer \lstinline[style=c++inline]{mr}.

The verification is based on an excerpt of the Robin statement and
expression semantics for \cpp{}~\cite{tews08verification} extended with
the \cpp{} instance of our data-type semantics. We
will first focus on
\crefrange{lst:scheduler-line-current}{lst:scheduler-line-insert}, as
they demonstrate the normal use of our data-type semantics. After
that, we dive into the function
\lstinline[style=c++inline]{TCB::current()}, which extracts the TCB
pointer of the current thread from the processor's kernel stack
pointer, and look at the interplay between
\lstinline[style=c++inline]{list<TCB>::push_back} and the erroneous
call to \lstinline[style=c++inline]{memcpy}.

\subsection{Preempt Current Thread}
\label{sec:preempt_current_thread}
%-----------------------------------

For our example, we use sets $\mathbb{S}^T$ that are rich enough to
fulfill the respective preconditions of
\crefrange{lemma:uninitialized}{lemma:bitwise-copy}, for all used types $T$.
We assume that the compiler inlines the call to
\lstinline[style=c++inline]|push_back| in
\cref{lst:scheduler-line-insert}, which therefore
expands to the usual update of the
\lstinline[style=c++inline]{prev} and \lstinline[style=c++inline]{next}
pointers of double-linked list inserts. By inheriting from
\lstinline[style=c++inline]{list<TCB>},
\lstinline[style=c++inline]{class TCB}-typed objects include these
pointers in their representation. In the course of updating these
pointers, the value of \lstinline[style=c++inline]{current} must be
read to obtain the addresses of these members. This value can only be
obtained by reading the byte list at the address of
\lstinline[style=c++inline]{current} and interpreting it using
$s^{\mathit{TCB}*}.\mathit{from\_byte}$. The assignment to
\lstinline[style=c++inline]{p} in \cref{lst:scheduler-line-read} or
hardware side effects (\eg when reading
\lstinline[style=c++inline]{current->priority}) may modify this byte
list in which case our data-type semantics prevents any
verification.
% Remember, because we have chosen $s^{\mathit{TCB}*}$
% arbitrarily from $\mathbb{S}^{\mathit{TCB}*}$, the result of
% $\mathit{from\_byte}$ on this modified byte list must be defined for
% all $s \in \mathbb{S}^{\mathit{TCB}*}$ that allow TCB pointers at the
% address \lstinline[style=c++inline]{current} but then
% $\mathbb{S}^{\mathit{TCB}*}$ would not be type sensitive.
We therefore make the (sensible) assumption that
the
objects at \lstinline[style=c++inline]{current} and
\lstinline[style=c++inline]{p} are allocated at disjoint address
regions, which are not changed by any side effect.\footnote{
  These two assumptions are only made to simplify the case study.
  In a real verification, the disjointness would be implied by
  the functional correctness of the memory allocator. A suitable
  type-safety invariant would imply that side effects occur only
  in other address regions.
}
Then our rewrite engine
simplifies the typed read of the \lstinline[style=c++inline]|current| pointer
to
\begin{equation*}
s^{\mathit{TCB}*}.\mathit{from\_byte}(s^{\mathit{TCB}*}.\mathit{to\_byte}(v, \mathtt{current}), \mathtt{current})
\end{equation*}
which \cref{eq:semantic:inverse} collapses to $v$ where $v$ is the
result of \lstinline[style=c++inline]{TCB::current()}.

When compared to other approaches, the qualitative difference is that
our approach demands either a proof of disjointness, or
additional assumptions that connect the object representations of
\lstinline[style=c++inline]{unsigned char} and
\lstinline[style=c++inline]{TCB*}. For the same reason,
\cref{lemma:bitwise-copy} demands for a fix of the call to
\lstinline[style=c++inline]{memcpy} because only then the list
invariant can be maintained that running threads are never in the
priority list. An erroneous \lstinline[style=c++inline]{memcpy} of the
characters of not trivially-copyable type
\lstinline[style=c++inline]{list<TCB>} prevents the proof of such an
invariant.

\subsection{TCB::current()}
\label{sec:current}
%---------------------------

The verification of \lstinline[style=c++inline]{TCB::current()}
(\crefrange{lst:current}{lst:scheduler-line-shift}) demonstrates the
inclusion of additional assumptions without restricting $\mathbb{S}^T$
to a point where type sensitivity is no longer given. 

Co-locating the kernel stack next to sufficiently aligned objects is a
common programming pattern in microkernels to quickly retrieve
pointers to these objects. \lstinline[style=c++inline]{TCB::current()}
reads the stack pointer value in \lstinline[style=c++inline]{esp} as
an \lstinline[style=c++inline]{unsigned long}
(\cref{lst:scheduler-line-asm}), rounds it to the object alignment
(\cref{lst:scheduler-line-shift}) and casts it into the respective
pointer type (\cref{lst:scheduler-line-cast}). When verifying the
first operation, it is tempting to fix the encoding of
\lstinline[style=c++inline]{esp} as the four-byte little endian
representation of machine words and to require that the word values of
stack addresses are valid object representations for
\lstinline[style=c++inline]{unsigned}~\lstinline[style=c++inline]{long}. Under
these assumptions, underspecified data-type semantics can detect
modifications that cause the \lstinline[style=c++inline]{esp} value to
point to non-stack addresses. However, modifications that cause the
\lstinline[style=c++inline]{esp} to point to other (possibly
unallocated) stacks remain undetected. An elegant way to circumvent
these problems is to introduce a semantic structure also for the
\lstinline[style=c++inline]{esp} register. The verification is then performed
against a whole family of processors that differ in their choice
of semantics structure for register \lstinline[style=c++inline]|esp|.
%including the one on which the kernel should run.

\end{appendix}

\end{document}